\theoremstyle{plain}
\newtheorem{Th}{Theorem}[section]
\newtheorem{Lemma}[Th]{Lemma}
 \theoremstyle{definition}
\newtheorem{Def}[Th]{Definition}
\newtheorem{?}[Th]{Problem}
\begin{document}

\title{Results on Pattern Avoidance Games}

\author{P.A. CrowdMath}

\maketitle

\begin{abstract} A zero-one matrix $A$ contains another zero-one matrix $P$ if some submatrix of $A$ can be transformed to $P$ by changing some ones to zeros. $A$ avoids $P$ if $A$ does not contain $P$. The Pattern Avoidance Game is played by two players. Starting with an all-zero matrix, two players take turns changing zeros to ones while keeping $A$ avoiding $P$. We study the strategies of this game for some patterns $P$. We also study some generalizations of this game. 
\end{abstract}

\section{Introduction} A zero-one matrix $A$ contains another zero-one matrix $P$ (the pattern) if some sub-matrix of $A$ can be transformed to $P$ by changing some ones to zeros. $A$ avoids $P$ if $A$ does not contain $P$. A central problem about pattern avoidance is to find the extremal function $ex(n, P)$ which is the maximal number of ones in an $n$ by $n$ zero-one matrix that avoids the pattern $P$. The extremal functions of various patterns have been studied in many papers, including \cite{Fu}, \cite{Mi} and \cite{FH}. 

In this paper, we study the strategies of the pattern avoidance game which is related to the extremal function problem. Starting with an $n$ by $n$ all-zero matrix $A$, two players, Player $1$ and Player $2$, take turns changing zeros to ones (Player $1$ goes first). If any player's turn causes the matrix to contain the pattern $P$, then that player loses. We denote this game by PAG$(n,P)$. Each turn the player must change exactly one zero entry to an one. We only study cases where no dimension of $P$ exceeds $n$; other cases are trivial because in those cases the matrix can never contain $P$.

In Section $2$, we will study the strategies of the game for some special patterns $P$. In Section $3$, we will generalize the game to more than two players and higher dimensional matrix. The last section will be about open problems related to this pattern avoidance game. 

\section{Strategies for Special Patterns}
We start with the following theorem about the strategies when the pattern $P$ is the $1$ by $k$ matrix with all ones.

\begin{Th}
	Suppose $P$ is the $1$ by $k$ matrix with all ones. When two players play PAG$(n,P)$, the winner is Player $1$ if $k$ is even and $n$ is odd and is Player $2$ otherwise.
\end{Th}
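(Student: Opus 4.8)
The plan is to show that, up to the players' freedom in choosing \emph{which} safe cell to fill, the play of PAG$(n,P)$ is essentially deterministic: starting from the all-zero matrix, the game reaches after exactly $n(k-1)$ moves the saturated position in which every row contains exactly $k-1$ ones, and the player who must move from there loses. The parity of $n(k-1)+1$ then decides the winner.

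The first step is an elementary observation about non-losing moves. Since $P$ is the $1\times k$ all-ones matrix, a matrix avoids $P$ exactly when every row has at most $k-1$ ones. Suppose the current matrix $A$ avoids $P$ but some row $r$ has at most $k-2$ ones. Because $k\le n$, that row has at most $n-2$ ones and hence contains a zero entry; changing it to a one leaves row $r$ with at most $k-1$ ones and every other row unchanged, so the new matrix still avoids $P$. Thus, as long as not every row is already at $k-1$ ones, the player to move has a legal move that does not lose. Conversely, if every row has exactly $k-1$ ones, then any move creates a row with $k$ ones, so every legal move loses; such a move exists since the board then holds $n(k-1)\le n(n-1)<n^2$ ones, hence still has a zero entry.

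The second step uses optimality. A move that creates $P$ ends the game with an immediate loss for its maker, the worst possible outcome, so a rational player makes such a move only when forced, i.e. only from the saturated position. Hence the players keep making non-losing moves as long as this is possible. Each non-losing move increases the number of ones by one, and by the first step this continues until the number of ones reaches $n(k-1)$, at which moment the board is saturated; the next move, move number $n(k-1)+1$, is forced to create $P$, and its maker loses. In particular the length $n(k-1)$ of the non-losing phase is independent of how the players chose their cells.

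Finally, identify the loser by parity. Player $1$ makes the odd-numbered moves and Player $2$ the even-numbered ones, so the losing move, number $n(k-1)+1$, is Player $1$'s if $n(k-1)$ is even and Player $2$'s if $n(k-1)$ is odd. Since $n(k-1)$ is odd exactly when both $n$ is odd and $k-1$ is odd, i.e. when $n$ is odd and $k$ is even, Player $1$ wins precisely in that case and Player $2$ wins otherwise, as claimed. I do not expect a genuine obstacle here; the only points needing care are the use of the hypothesis $k\le n$ to guarantee a non-losing move off the saturated position, and the remark that the number of non-losing moves does not depend on the players' choices.
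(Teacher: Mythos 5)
Your proof is correct and follows essentially the same route as the paper's: a move is non-losing exactly when the chosen row has fewer than $k-1$ ones, so the game deterministically lasts $n(k-1)$ non-losing moves until every row is saturated, and the parity of $n(k-1)+1$ identifies the loser. You are somewhat more careful than the paper (using $k\le n$ to guarantee moves exist and noting the game length is choice-independent), but the underlying argument is the same.
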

 
\begin{proof}
	The key observation is that any player can change a zero in some row of $A$ to an one without losing the game that turn if and only if the number of ones in that row is less than $k-1$. That means there will be a time when each row has exactly $k-1$ ones and the next player loses no matter what their strategies are. At that time a total number of $(k-1)n$ zeros have been changed to ones. So if $k$ is even and $n$ is odd, then $(k-1)n$ is also odd and the next player (who will lose)is Player $2$. Otherwise Player $2$ wins.  
\end{proof}

The above theorem deals with all cases where $P$ is one dimensional. Let's move on to the next simplest case where $P$ is the $2$ by $2$ identity matrix.

\begin{Th} 
	Suppose $P$ is the $2$ by $2$ identity matrix. When two players play PAG$(n,P)$, the winner is Player $1$.
\end{Th}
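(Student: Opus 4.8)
The plan is to show that PAG$(n,P)$ is \emph{forced}: from any legal board, the player to move either has a safe move or is on a saturated board (and then loses), and all saturated $P$-avoiding boards have the same number of ones, so the winner is determined by a parity count. First I would record the avoidance condition: an $n\times n$ zero-one matrix $A$ avoids $P$ (the $2\times2$ identity) exactly when it has no two $1$-entries $(i_1,j_1)$, $(i_2,j_2)$ with $i_1<i_2$ and $j_1<j_2$; equivalently, if rows $r<r'$ both contain $1$'s, then the smallest column used in row $r$ is at least the largest column used in row $r'$. From this, a mover facing a saturated board is forced to create $P$ and loses, while from a non-saturated legal board the mover always has a safe move; so play runs through $P$-avoiding matrices of sizes $1,2,3,\dots$ until a saturated one appears, and the following move loses.

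The core is the Lemma: \emph{every saturated $P$-avoiding $n\times n$ matrix $A$ has exactly $2n-1$ ones}, which I would prove by determining the shape of $A$. Step (a): no row of $A$ is empty --- if row $a$ were empty with occupied rows both above and below, then the avoidance condition guarantees a column $b$ with (largest column used below $a$) $\le b \le$ (smallest column used above $a$), and $(a,b)$ would be a safe move; while if no occupied row lies above $a$ (resp.\ below $a$), then $(a,n)$ (resp.\ $(a,1)$) would be a safe move; contradiction. Step (b): for a saturated matrix, $(i,b)$ is a $1$ if and only if $A$ has no $1$ strictly to its upper-left and none strictly to its lower-right; using this with (a) and the avoidance condition, I would show the set of occupied columns in row $i$ is exactly the interval $[f_i,\,f_{i-1}]$, where $f_0=n$, $f_n=1$, and $n=f_0\ge f_1\ge\cdots\ge f_{n-1}\ge f_n=1$. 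Step (c): then $|A|=\sum_{i=1}^{n}(f_{i-1}-f_i+1)=n+(f_0-f_n)=2n-1$. I expect Step (b), namely proving that saturation forces each row to be a \emph{full} interval with exactly those endpoints, to be the main obstacle; (a) and (c) are short once the avoidance condition is in hand.

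To conclude, I would combine the Lemma with parity. Every $P$-avoiding matrix extends, by repeatedly playing safe moves, to a saturated one, which has $2n-1$ ones; hence no $P$-avoiding $n\times n$ matrix has more than $2n-1$ ones, and a board with $2n-1$ ones is already saturated. So before each of Player $1$'s moves --- moves number $1,3,\dots,2n-1$, when the board has at most $2n-2$ ones and is therefore not saturated --- Player $1$ has a safe move, and Player $1$'s strategy is just to play it. After Player $1$'s $(2n-1)$st move the board has $2n-1$ ones and is saturated, so Player $2$'s move number $2n$ must create $P$; Player $2$ loses (and loses even sooner if Player $2$ ever creates $P$ before then). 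Therefore Player $1$ wins PAG$(n,P)$.
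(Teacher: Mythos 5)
Your proposal is correct, but it takes a genuinely different route from the paper's proof of this theorem. The paper gives an explicit strategy: Player $1$ opens with the top-left corner $(1,1)$; after that, any $1$ placed outside the first row and first column immediately creates the pattern with $(1,1)$, while every cell of the first row and first column remains safe, so exactly $2(n-1)$ safe moves remain and parity makes Player $2$ the first player with no safe move. Your argument instead proves the structural lemma that \emph{every} saturated $P$-avoiding $n\times n$ board has exactly $2n-1$ ones (via the staircase shape: each row is a full column-interval $[f_i,f_{i-1}]$ with $f_0=n$, $f_n=1$), and then concludes by parity that any safe-playing Player $1$ wins. Your lemma is in fact the paper's Theorem 3.2 (an $h\times w$ maximal matrix has $h+w-1$ ones, with $h=w=n$), which the paper only proves later, in Section 3, to handle the multi-player game; your sketch of it is essentially the same argument as the paper's. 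So your route is heavier for this particular theorem, but it buys more: it shows the game length is invariant, hence Player $1$ wins with \emph{any} strategy that never creates $P$ when a safe move exists (not just the corner-plus-mirroring strategy), and it extends immediately to rectangular boards and to the multi-player setting, exactly as the paper exploits in Section 3. The paper's Section 2 proof is shorter and more elementary but yields only the specific winning strategy. Your outline's steps (a)--(c) all check out (the interval structure follows from the saturation criterion that a cell is $1$ iff it has no $1$ strictly to its upper-left or lower-right), so the proposal is sound.
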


\begin{proof}
	Here is the winning strategy for Player $1$: on the first turn, Player $1$ changes the zero in the top left corner of $A$ to an one. Then no one should pick zeros that are not in the top row or the leftmost column, because such a move will make the person the loser. There are $2(n-1)$ available choices left, and when they are all changed to ones, it is Player $2$'s turn. So Player $2$ is the loser.
\end{proof}

Another way to think about the above strategy for Player $1$ is the following: Player $1$ first picks the top left corner, and then he mimics Player $2$'s move. If Player $2$ picks a zero in the top row, then in the next turn Player $1$ picks the corresponding zero in the leftmost column. If Player $2$ picks a zero in the leftmost column, then Player $1$ picks the corresponding zero in the top row. 

The next theorem further explores this mimicking strategy for some other patterns.

\begin{Th}
	If $n$ is even and $P$ is composed of an odd number of identical rows or columns, then the winner is Player $2$.
\end{Th}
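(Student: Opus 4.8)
The plan is to hand Player~$2$ a mimicking strategy, but with respect to a symmetry of whole rows rather than the cell-transpose used for the $2\times 2$ identity. By transposing the entire game -- which leaves the winner unchanged, since flipping $(i,j)$ in $A$ corresponds to flipping $(j,i)$ in $A^{T}$ and $A$ contains $P$ iff $A^{T}$ contains $P^{T}$ -- we may assume $P$ is the $r\times c$ matrix all of whose rows equal a fixed vector $v=(v_{1},\dots,v_{c})$, with $r$ odd (we assume as usual that $P$ has at least one $1$, i.e.\ $v\neq 0$, otherwise the all-zero matrix already contains $P$). Since $n$ is even, let $\sigma$ be the fixed-point-free involution on the row indices $\{1,\dots,n\}$ swapping $2i-1\leftrightarrow 2i$, and call a matrix \emph{symmetric} if row $a$ equals row $\sigma(a)$ for every $a$. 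Player~$2$'s strategy: whenever Player~$1$ changes the $(j,\ell)$ entry from $0$ to $1$, Player~$2$ changes the $(\sigma(j),\ell)$ entry from $0$ to $1$. I will check that this is always a legal move, that the matrix is symmetric after each of Player~$2$'s moves, and -- the crucial point -- that Player~$2$'s move never creates $P$.

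Legality and the invariant are quick. Inductively the matrix $S$ just before Player~$1$'s move is symmetric (or is the all-zero start), so $S[\sigma(j),\ell]=S[j,\ell]=0$ and $\sigma(j)\neq j$; since Player~$1$ altered only $(j,\ell)$, the entry $(\sigma(j),\ell)$ is still $0$ and available to Player~$2$. The resulting matrix is $S$ with the $\sigma$-orbit $\{(j,\ell),(\sigma(j),\ell)\}$ of cells adjoined, and adjoining a full $\sigma$-orbit of cells to a symmetric matrix keeps it symmetric, so the invariant is restored.

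The heart of the argument is the following claim: if $S$ is symmetric and avoids $P$, and Player~$1$'s move at $(j,\ell)$ is safe -- so $S':=S+\{(j,\ell)\}$ still avoids $P$ -- then $M:=S+\{(j,\ell),(\sigma(j),\ell)\}$ also avoids $P$. Suppose not. Containing $P$ means there exist columns $b_{1}<\dots<b_{c}$ for which at least $r$ rows are \emph{good}, where a row $a$ is good (for this column choice) if $M[a,b_{s}]=1$ whenever $v_{s}=1$. Fix such columns and let $G$ be the set of all good rows, so $|G|\ge r$. Since $M$ is symmetric, $a$ is good iff $\sigma(a)$ is good, so $G$ is $\sigma$-invariant; as $\sigma$ is fixed-point-free, $|G|$ is even, and since $r$ is odd this forces $|G|\ge r+1$. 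But $M$ and $S'$ differ only in the single entry $(\sigma(j),\ell)$, so for these columns every row other than $\sigma(j)$ is good in $S'$ exactly when it is good in $M$; hence at least $|G|-1\ge r$ rows are good in $S'$ for the same columns, so $S'$ contains $P$ -- contradicting that Player~$1$'s move was safe. This proves the claim.

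With the claim in hand, the invariant holds throughout the game, so whenever it is Player~$2$'s turn Player~$1$ has just moved: if that move was safe, the claim provides Player~$2$ a safe reply; if it was not safe, Player~$1$ has already lost. Either way Player~$2$ never creates $P$. Since each move turns a $0$ into a $1$, the game lasts at most $n^{2}$ turns and some player must eventually create $P$; it cannot be Player~$2$, so Player~$1$ loses and Player~$2$ wins. The main obstacle is the claim -- specifically, recognizing that the right invariant is full row-symmetry (not the cell-transpose of Theorem for the identity) and then running the parity count: row-symmetry forces the set of good rows to have even size, and the oddness of $r$ supplies exactly the one row of slack needed to undo Player~$2$'s mirrored move without destroying a copy of $P$. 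It is also worth noting precisely where the hypotheses enter: $n$ even is what makes $\sigma$ fixed-point-free (so symmetric sets of rows have even size), and the ``odd number of identical rows or columns'' hypothesis is exactly what makes $r$ odd.
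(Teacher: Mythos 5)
Your proof is correct and takes essentially the same route as the paper: Player~$2$ mirrors Player~$1$'s move across a fixed pairing of the $n$ lines (you pair rows after transposing; the paper pairs columns $2k-1\leftrightarrow 2k$), and the parity argument---the odd number of identical rows/columns of $P$ together with the symmetry-forced evenness of the set of witnessing lines---shows that Player~$2$'s mirrored move can never be the one that first completes $P$. Your write-up is somewhat more explicit about legality of the reply and termination of the game, but the underlying strategy and key counting step are the same as the paper's.
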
 

\begin{proof}
	WLOG we assume that $P$ has $k$ identical columns and $k$ is an odd number. Since $n$ is even, Player $2$ can always make sure that after his turn the $2k-1$-th column of the matrix $A$ is always the same as the $2k$-th column for all $0<k<n/2+1$. We can prove that under this strategy Player $2$ is not the loser (hence Player $1$ is) by showing that if after a Player $2$'s turn the matrix contains the pattern $P$, then it already did before that turn. 
	
	Given Player $2$'s strategy, we know that after Player $2$'s turn the $2k-1$-th column of the matrix is the same as the $2k$-th column for all $0<k<n/2+1$. If the matrix contains $P$, it means we can find $k$ columns of 
	$A$ such that the sub-matrix of $A$ consisting of those $k$ columns contains the pattern $P$. But $P$ has an odd number of identical columns, so we can find $k+1$ columns of $A$ such that sub-matrix formed by any $k$ of them contains the pattern $P$. In that turn Player $2$ has only changed one zero in some column, so the rest of the $k$ columns contain the pattern $P$ and they are not changed in Player $2$'s turn. This means before Player $2$'s turn the matrix $A$ already contained the pattern $P$. Hence Player $2$ is never the loser.    
	 
\end{proof}

\section{Multi-player Games}
In this section we study the multi-player version of this pattern avoidance game. The setting is pretty similar: more than two players take turns changing zeros to ones until after someone's turn the matrix contains the pattern $P$, and that person is the loser. Other players win. 

In general the strategies for the multi-player game could be very hard to study, but for some special $P$ the "maximal" matrices all have the same number of ones, making things much simpler. We begin with the definition of "maximal" matrix. 

\begin{Def}
	Fixed a pattern $P$. A zero-one matrix $A$ is called maximal if $A$ does not contain $P$ and $A$ will contain $P$ as long as we change any zero of $A$ to one. 
\end{Def}

For any pattern $P$, there are many maximal matrices and those might not have the same number of ones. However, our next theorem shows that when $P$ is the $2$ by $2$ identity matrix, all maximal matrices have the same number of ones.

\begin{Th}
	Suppose $P$ is the $2$ by $2$ identity matrix and $A$ is an $h$ by $w$ maximal matrix for the pattern $P$. Then the number of ones in $A$ is $h+w-1$.  
\end{Th}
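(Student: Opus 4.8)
The plan is to show that a maximal matrix for $P$ has a rigid staircase shape, and to extract the count by induction on the number of rows $h$, peeling off the top row at each step. I would begin with two reformulations. Because $P$ is the $2\times 2$ identity and containment is allowed to turn ones into zeros, a matrix $A$ contains $P$ precisely when $A$ has ones at two positions $(r_1,c_1)$ and $(r_2,c_2)$ with $r_1<r_2$ and $c_1<c_2$; equivalently, $A$ avoids $P$ iff no one of $A$ lies strictly above and to the left of another one of $A$. The maximality hypothesis unpacks to: switching any zero of $A$ to a one creates a copy of $P$, and since $A$ itself avoids $P$ this copy must use the new one; hence every zero cell of $A$ either has an existing one strictly below and to the right of it (so the new one is the upper-left corner of a $P$) or has an existing one strictly above and to the left of it (so the new one is the lower-right corner). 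These are the only facts I expect to use.

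For the base case $\min(h,w)=1$ the pattern $P$, being $2\times 2$, is not a submatrix of $A$ at all, so $A$ never contains $P$ and maximality forces $A$ to be all ones; then $A$ has $hw=h+w-1$ ones. Now assume $h,w\ge 2$. Applying the maximality criterion to the corner cell $(1,w)$ shows that this entry of $A$ is a one: a one placed there could be neither the upper-left corner of a $P$ (no row lies above row $1$) nor the lower-right corner (no column lies to the right of column $w$). Next, the ones in row $1$ occupy a contiguous block of columns: if $(1,c_1)$ and $(1,c_2)$ were ones with a zero at $(1,c')$ for some $c_1<c'<c_2$, then switching $(1,c')$ must produce a one strictly below and to the right of it, say at $(r,c)$ with $r>1$ and $c>c'$, and then $(1,c_1)$ together with $(r,c)$ is already a copy of $P$ in $A$ --- a contradiction. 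Let $m$ be the leftmost column containing a one of row $1$; combining the two observations, row $1$ has ones exactly in columns $m,m+1,\dots,w$, so it contains $w-m+1$ ones. Finally, avoidance applied to the one at position $(1,m)$ shows every one of $A$ in rows $2,\dots,h$ sits in a column at most $m$.

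Now peel off the top row. Let $A'$ be the $(h-1)\times m$ matrix formed by rows $2,\dots,h$ of $A$ restricted to columns $1,\dots,m$; by the last observation $A'$ retains every one of $A$ outside row $1$, and it clearly still avoids $P$. The crux is that $A'$ is maximal for $P$. Take any zero cell $(r,c)$ of $A'$, so $r\ge 2$ and $c\le m$. Maximality of $A$ gives an existing one of $A$ that is strictly below and to the right of $(r,c)$, or strictly above and to the left of it. In the first case that one lies in some row $\ge 3$, and being in a row $\ge 2$ it sits in a column $\le m$, so it still belongs to $A'$. In the second case that one cannot be in row $1$: a one in row $1$ lies in a column $\ge m\ge c$, contradicting that it is strictly to the left of column $c$; hence it lies in rows $2,\dots,h$ and in a column $<c\le m$, so again it belongs to $A'$. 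Either way, switching $(r,c)$ to a one creates a copy of $P$ inside $A'$, proving $A'$ maximal. By the induction hypothesis $A'$ has $(h-1)+m-1$ ones, and adding the $w-m+1$ ones of row $1$ yields $(h-1)+m-1+(w-m+1)=h+w-1$, which closes the induction.

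The step I expect to be the main obstacle is exactly the verification that $A'$ is maximal: one must rule out the scenario in which a zero of $A$ was forced only by a one sitting in the deleted top row or in a deleted right-hand column, so that the zero becomes ``un-forced'' in $A'$. This is where the two confinement facts --- that every one outside row $1$ lies in columns $\le m$, and that the ones of row $1$ lie in columns $\ge m$ --- do the real work; the rest is bookkeeping once the avoidance and maximality hypotheses are written in the geometric ``no one above-and-to-the-left of another / every zero sees a one below-and-to-the-right or above-and-to-the-left'' form.
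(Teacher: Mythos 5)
Your proof is correct, but it takes a genuinely different route from the paper's. The paper argues non-inductively: it shows every column of $A$ contains a one, that the ones in each column form a contiguous run, that the run of each column overlaps the run of the previous column in exactly one row (the column-minima $h_i$ being non-increasing), and then reads off the count as $w+h-1$ from this explicit staircase description of all maximal matrices. You instead induct on the number of rows: you establish only the local structure of row $1$ (a contiguous block of ones in columns $m,\dots,w$, with all other ones confined to columns $\le m$), prove as your key lemma that the restriction of $A$ to rows $2,\dots,h$ and columns $1,\dots,m$ is again maximal, and add $(w-m+1)$ to the inductive count $(h-1)+m-1$. The paper's route buys a complete characterization of the maximal matrices (they are exactly the monotone staircases), which is more information than the count alone, though its verification that every intermediate entry ``should be one'' is stated rather tersely; your route trades the global description for a fully checked reduction step, and its peeling of the last row/columns is in fact closer in spirit to the paper's own proof of the $d$-dimensional generalization (Theorem \ref{mainthm}), where the $w_d$-th cross-section is removed in Lemma \ref{final} after a normalization. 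Both arguments rest on the same geometric reformulation: $A$ avoids $P$ iff no one lies strictly above and to the left of another, and maximality means every zero sees an existing one strictly below-right or strictly above-left.
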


\begin{proof}
	First we claim that every column of $A$ has at least an one entry. To see this, take any entry $A_{i,j}$, there cannot be an one entry both to its upper left and lower right direction. If there's an one to its upper left, decrease $i$ until there isn't -- now there's an one to its left, and none to its lower right, so $A_{i,j}$ should be one.
	
	For $1\leq i\leq w$, define $h_i$ as the smallest row index such that $A_{h_i,i}=1$. The process above shows that $\{h_i\}$ is a non-increasing sequence. Moreover, for each column $i>1$ there's no reason any of $A_{h_i,i}, A_{h_i+1,i},\ldots,A_{h_{i-1},i}$ is not one. Similarly $A_{h_1,1},\ldots,A_{h,1}$ should all be one. So each column of $A$ has consecutive ones overlapping with its previous column at exactly one position. The total number of ones is thus $w+h-1$.
\end{proof}
	The above theorem shows that when $P$ is the $2$ by $2$ identity matrix, the loser of the multi-player pattern avoidance game is always the player who takes the $(h+w)$-th turn. 
	
	Now we can generalize the above theorem to $d$-dimensional matrix. This theorem was also included in the paper \cite{shenfu}.
	
\begin{Th}\label{mainthm}
    	Suppose $P$ is the $2\times 2\times \dots \times 2$ $d$-dimensional identity matrix($P$ has two entries that are one and $2^d-2$ entries that are zero, $d\geq 2$ ). $A$ is a $w_1\times w_2 \times \dots \times w_d$ $d$-dimensional maximal matrix for the pattern $P$. Then the number of entries of $A$ that are one is $\prod_{i=1}^dw_i-\prod_{i=1}^d(w_i-1)$.  
\end{Th}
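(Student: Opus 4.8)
The plan is to recast the statement as a fact about antichains in a partial order, and then produce a single chain partition of the index set that is simultaneously an upper bound on the number of ones in \emph{any} avoiding matrix and a lower bound on the number of ones in \emph{every} maximal one.

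First I would set up notation: let $G=\{1,\dots,w_1\}\times\cdots\times\{1,\dots,w_d\}$ index the entries of $A$, write $\mathbf 1=(1,\dots,1)$, and declare $p\prec q$ when $p_i<q_i$ for every coordinate $i$. Unwinding the definition of containment for the $d$-dimensional identity $P$: the matrix $A$ contains $P$ exactly when $A$ has two one-entries $x\prec y$ (these supply the two one-positions of a $2\times\cdots\times 2$ submatrix). Hence $A$ avoids $P$ iff the set $S$ of one-entries of $A$ is a $\prec$-antichain, and $A$ is maximal iff $S$ is an inclusion-maximal $\prec$-antichain, i.e.\ every $p\in G\setminus S$ is $\prec$-comparable to some element of $S$. (For $d=2$ this is the situation of the previous theorem.)

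Next I would partition $G$ into \emph{diagonals}: for each $p\in G$ having some coordinate equal to $1$, let $\delta_p=\{\,p+j\mathbf 1:\ j=0,1,2,\dots\,\}$, continued only while it stays in $G$. Each $\delta_p$ is a $\prec$-chain; every point of $G$ lies on exactly one such diagonal (repeatedly subtract $\mathbf 1$ until a coordinate reaches $1$); and the number of diagonals equals the number of points of $G$ with some coordinate equal to $1$, namely $\prod_i w_i-\prod_i(w_i-1)$. Since a $\prec$-antichain meets each chain $\delta_p$ in at most one point, every $P$-avoiding matrix has at most $\prod_i w_i-\prod_i(w_i-1)$ ones; in particular this bounds every maximal matrix from above.

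The heart of the proof is the matching lower bound: a maximal antichain $S$ meets every diagonal. Suppose a diagonal $p^{(0)}\prec p^{(1)}\prec\cdots\prec p^{(k)}$, with $p^{(j+1)}=p^{(j)}+\mathbf 1$ and $k\ge 0$, is disjoint from $S$. Here $p^{(0)}$ has a coordinate equal to $1$, so nothing in $G$ is $\prec p^{(0)}$, and $p^{(k)}$ has a coordinate $i$ with $p^{(k)}_i=w_i$, so nothing in $G$ is $\succ p^{(k)}$. Applying maximality to $p^{(k)}\notin S$: it is $\prec$-comparable to some $s\in S$, and since nothing is $\succ p^{(k)}$ we get $s\prec p^{(k)}$ for some $s\in S$; because nothing is $\prec p^{(0)}$, this already forces $k\ge 1$. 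Now let $j$ be the least index with $s\prec p^{(j)}$ for some $s\in S$; then $j\ge 1$. Fix such an $s$; from $s\prec p^{(j)}=p^{(j-1)}+\mathbf 1$ and integrality of the coordinates, $s\le p^{(j-1)}$ coordinatewise. Since $p^{(j-1)}\notin S$ and, by minimality of $j$, no element of $S$ is $\prec p^{(j-1)}$, maximality supplies $s'\in S$ with $p^{(j-1)}\prec s'$. Then $s_i\le p^{(j-1)}_i<s'_i$ for every $i$, i.e.\ $s\prec s'$ with $s,s'\in S$ distinct, contradicting that $S$ is an antichain. Hence $S$ meets all $\prod_i w_i-\prod_i(w_i-1)$ diagonals, so $|S|\ge\prod_i w_i-\prod_i(w_i-1)$; with the upper bound, equality holds.

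I expect the only real obstacle to be the step that maximality of the antichain forces it to puncture every diagonal. The mechanism that makes it work is the elementary observation that $s\prec p+\mathbf 1$ is equivalent to $s\le p$ coordinatewise over the integers, which is exactly what lets a point lying below $p^{(j)}$ be compared to a point lying above $p^{(j-1)}$; everything else (the chain partition, the count $\prod_i w_i-\prod_i(w_i-1)$, the antichain reformulation) is routine.
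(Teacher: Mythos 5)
Your proposal is correct, and it takes a genuinely different route from the paper. The paper proves the theorem by induction on $d$ and $w_d$: it slices $A$ into $d$-rows, shows the ones in each $d$-row can be taken to form an interval $[h(x),l(x)]$ with $l(x)=\min(w_d,\min_{z\in A(x)}h(z))$ and $h(x)=\max(1,\max_{z\in D(x)}l(z))$, and then repeatedly applies a conversion lemma that pushes ones off the top cross-section without changing their number, until the last cross-section can be stripped off and the count finished by the inductive hypothesis. You instead observe that avoiding the $d$-dimensional identity means the set $S$ of ones is an antichain for the strict dominance order, that maximality of $A$ means $S$ is an inclusion-maximal antichain, and then you partition the grid into the $\prod_i w_i-\prod_i(w_i-1)$ diagonal chains $p,p+\mathbf 1,p+2\mathbf 1,\dots$ starting at points with some coordinate equal to $1$. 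The chain partition gives the upper bound for \emph{every} avoiding matrix, and your argument that a maximal antichain must meet each diagonal (via the key step that $s\prec p^{(j)}=p^{(j-1)}+\mathbf 1$ forces $s\le p^{(j-1)}$ coordinatewise, so a missed diagonal would produce comparable elements $s\prec s'$ of $S$) gives the matching lower bound; I checked this argument, including the endpoint cases where $p^{(0)}$ has a coordinate equal to $1$ and $p^{(k)}$ has a coordinate equal to its maximum, and it is sound. What your approach buys is a short, non-inductive proof with a transparent explanation of the formula (it counts diagonals), plus the stronger statement that maximal equals maximum for this poset, which is exactly the content of the cited work of Tsai; what the paper's approach buys is an explicit structural normal form for maximal matrices ($d$-row intervals determined by the $l$ and $h$ recursions), which your argument does not provide.
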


\begin{proof}
	Let $M$ be a maximal $w_1\times\ldots\times w_d$ matrix that avoids $P$. We focus on the $d$-rows of $M$.
	
	A $d$-row $x=(x_1,\ldots,x_{d-1})$ is an {\it ancestor} of $d$-row $y=(y_1,\ldots,y_{d-1})$ if $x_i<y_i$ for $1\leq i\leq d-1$. We say $y$ is a {\it descendant} of $x$. $A(x)$ and $D(x)$ are the sets of ancestors and descendants of $x$, respectively. Define $M(x,x_d)=M_{x_1,\ldots,x_d}$. Ancestor and descendant relations are transitive, and the graphs of these two relations are directed acyclic.
	
	Similar to the proof of the $2$-dimensional case, we first claim that every $d$-row of $M$ has some one entry. Given a $d$-row $x$, if $A(x)$ is empty, then $M(x,w_d)=1$. If $D(x)$ is empty, then $M(x,1)=1$. If both sets are not empty, $h(A(x))$, the smallest $d$-coordinate of one entries in $A(x)$ must be greater than or equal to $l(D(x))$, the largest $d$-coordinate of one entries in $D(x)$, otherwise $M$ contains $P$. Then we pick some $y$ in $[l(D(x)), h(A(x))]$ and having $M(x,y)=1$ does not make $M$ contain $P$.
	
	By changing everything between two one entries in the same $d$-row to one, we can assume that the set $\{y:M(x,y)=1\}$ forms an integer arithmetic progressive sequence with common difference 1 for any $x$.
	
	For a $d$-row $x$, define $h(x)$ and $l(x)$ as the minimum and maximum $y$ such that $M(x,y)=1$, respectively. $w(M)$, the number of one entries in $M$, is then $\sum_{x}(l(x)-h(x)+1)$.
	
	Then let's examine $l(x)$. If $l(x)>h(z)$ for some $z\in A(x)$, then $M$ contains $P$. On the other hand, if $l(x)<\min_{z\in A(x)}h(z)$, setting $M(x,\min_{z\in A(x)}h(z))=1$ doesn't make $M$ contain $P$. Thus we conclude that $l(x)=\min(w_d, \min_{z\in A(x)}h(z))$. Similarly we have $h(x)=\max(1, \max_{z\in D(x)}l(z))$.
	
	We're now going to prove the theorem by induction on $d$ and $w_d$ through the following lemmas. Assume the theorem holds with $1,2,\ldots,d-1$ ($d \geq 3$)dimensions, and for $(w_1,\ldots,w_{d-1},1),(w_1,\ldots,w_{d-1},2),\ldots,(w_1,\ldots,w_{d-1},w_d-1)$.
	
	\begin{Lemma}\label{eq}
		A matrix $M$ avoiding $P$ is maximal if $l(x)=\min(w_d, \min_{z\in A(x)}h(z))$ and $h(x)=\max(1, \max_{z\in D(x)}l(z))$ for every $d$-row $x$.
	\end{Lemma}	
	
	\begin{proof}
		Clearly, if there's an one entry in some $d$-row $x$ with $d$-coordinate
		$y$
		bigger than $\min(w_d, \min_{z\in A(x)}l(z))$, then $y$ either goes
		beyond $w_d$, which is impossible, or $y>l(z)$ for some ancestor $z$ of $x$, and $M$ contains
		$P$. Similarly we couldn't have any one entry in any $d$-row $x$ with
		$d$-coordinate less than $\max(1, \max_{z\in D(x)}h(z))$.
	\end{proof}
	
	We say that a $d$-row $x=(x_1,\ldots,x_{d-1})$ is a {\it semi-ancestor} of $d$-row $y=(y_1,\ldots,y_{d-1})$ if $x_i\leq y_i$ for all $i$ in $[1,d-1]$ and $x_i<y_i$ for some $i$ in $[1,d-1]$.
	
	\begin{Lemma} If the set $\{x:l(x)=w_d, A(x)\neq\emptyset\}\neq\emptyset$, there are $d$-rows $x$ and $x'$, such that $x=(x'_1+1,\ldots,x'_{d-1}+1)$, $l(x)=w_d>h(x)$, and $x'$ has no other descendant $z$ with $l(z)=w_d$.
	\end{Lemma}
	Proof: In the first stage we find a $d$-row $x$ with $l(x)=w_d>h(x)$ and $A(x)\neq\emptyset$. Starting from any $d$-row $x$ with $l(x)=w_d$, if $h(x)=w_d$, then there is descendant $z$ of $x$ with $l(z)=w_d$. We move from $x$ to $z$ and see if $h(z)<w_d$. We repeat the process until we find a $x''$ with $l(x'')=w_d>h(x'')$, which is guaranteed to have non-empty ancestor set $A(x'')$. Such $x''$ exists because there are only finite number of $d$-rows, and whenever we hit one without descendants, $h(x'')=1$. We set $x=x''$ and $x'=(x_1-1,\ldots,x_{d-1}-1)$.
	
	In the second stage, if $x'$ has no other descendant $z$ with $l(z)=w_d$ then we're done. If it does, $D(z)\subset D(x)$ so $h(z)\leq h(x)<w_d$, i.e. $z$ qualifies the requirement for $x$ too. Moreover $x$ is a semi-ancestor of $z$, so we reset $x$ to $z$. We repeat this process again and again until $x'$ has no other descendant $z$ with $l(z)=w_d$. This process terminates because in the sequence of $x$ we found, each element is a semi-ancestor of it subsequent element, and this sequence couldn't be infinitely long.
	
	\begin{Lemma}\label{convert}
		 If the set $\{x:l(x)=w_d, A(x)\neq\emptyset\}\neq\emptyset$, we can convert $M$ to another maximal matrix $M'$ that avoids $P$ with the set smaller and yet $w(M)=w(M')$.
	\end{Lemma}
	
	\begin{proof}
	Proof: We find $d$-rows $x$ and $x'$, such that $x=(x'_1+1,\ldots,x'_{d-1}+1)$, $l(x)=w_d>h(x)$, and $x'$ has no other descendant $z$ with $l(z)=w_d$.
	
	Construct $M'$ identical to $M$ except $l'(x)=w_d-1$ and $h'(x')=w_d-1$. $M'$ has fewer $d$-rows in the set $\{x:l'(x)=w_d, A(x)\neq\emptyset\}$, plus $w(M')=w(M)$.
	
	We now show that $M'$ is still a maximal matrix that avoids $P$. The only changes we made is decreasing $l(x)$ and $h(x')$ by $1$. Clearly equation in Lemma \ref{eq} is not violated by $M'$ because $x'$ is an ancestor of $x$ and $x'$ has no other descendant $z$ with $l(z)=w_d$. So we're done.
	
	\end{proof}
	
	\begin{Lemma}\label{final}
	If $\{x:l(x)=w_d,A(x)\neq\emptyset\}=\emptyset$, then $w(M)=\prod_{i=1}^dw_i-\prod_{i=1}^d(w_i-1)$.
	\end{Lemma}
	
	\begin{proof}Transform $M$ to $M'$ by removing the $w_d$-th $d$-cross section. Since $l(x)=w_d$, if $A(x)=\emptyset$, then $w(M')=w(M)-|\{x:A(x)=\emptyset\}|$.
	By assumption,$w(M')=w_1w_2\ldots w_{d-1}(w_d-1)-(w_1-1)(w_2-1)\ldots (w_{d-1}-1)(w_d-2)$.
	$A(x)=\emptyset$ if and only if at least one of the coordinates $x_1,\ldots,x_{d-1}$ is $1$. So $|\{x:A(x)=\emptyset\}|=w_1\ldots w_{d-1}-(w_1-1)\ldots (w_{d-1}-1)$. Adding them together, $w(M)=\prod_{i=1}^dw_i-\prod_{i=1}^d(w_i-1)$.
	
	\end{proof}
	
	Finally, given any matrix $M$ that maximally avoids $P$, if necessary by applying Lemma \ref{convert} a finite number of times we can convert it to another maximal $M'$ with the same number of one entries, still avoids $P$, and with empty $\{x:l(x)=w_d,A(x)\neq\emptyset\}$. And then by Lemma \ref{final} we have $w(M)=\prod_{i=1}^dw_i-\prod_{i=1}^d(w_i-1)$. This concludes our proof.
	
\end{proof}  

\section{Open Problems}
\begin{itemize}
	\item Determine the winner of PAG$(n,P)$ for some other patterns $P$.
	\item Are there other patterns $P$ such that the maximal matrices always have the same number of one entries?
\end{itemize}

\section*{Acknowledgement}

CrowdMath is an open program created by the MIT Program for Research in Math, Engineering, and Science (PRIMES) and Art of Problem Solving that gives high school and college students all over the world the opportunity to collaborate on a research project. The 2016 CrowdMath project is online at http://www.artofproblemsolving.com/polymath/mitprimes2016. Theorem \ref{mainthm} was also included in the paper \cite{shenfu}.

\end{document}